\newtheorem{te}{Theorem}[section]
\newtheorem{lemma}{Lemma}[section]
\newtheorem{remark}{Remark}[section]
\newtheorem{pro}{Proposition}[section]
\newenvironment{proof} {\par \noindent \textbf{Proof: }}{\QED\par \bigskip \par}
\newcommand{\QED}{\hfill$\square$}
\newcommand{\rz}{\vspace{0.1cm}}
\title { \bigskip
    \bf {Constructions of hamiltonian graphs with
    bounded degree and diameter $O (\log n)$}
    \thanks{Supported by the Research Program P1-0285 of Slovenian Agency for Research and the Grant 144007 of Serbian Ministry of Science and Technological Development.}
}
\author
{
{\large \sc Aleksandar Ili\' c\footnotemark[3]} \\
{\em \normalsize Faculty of Sciences and Mathematics, University of Ni\v s, Serbia} \\
{\normalsize e-mail: { \tt aleksandari@gmail.com }} \and
{\large \sc Dragan Stevanovi\' c} \\
{\em \normalsize University of Primorska---FAMNIT, Glagolja\v ska 8, 6000 Koper, Slovenia,} \\
{\em \normalsize Mathematical Institute, Serbian Academy of Science
and Arts,}
{\em \normalsize Knez Mihajlova 36, 11000 Belgrade, Serbia }\\
{\normalsize e-mail: { \tt dragance106@yahoo.com}} }
\begin{document}

\maketitle

\vspace{-0.5cm}

\begin{abstract}
    Token ring topology has been frequently used in the design of
    distributed loop computer networks and one measure
    of its performance is the diameter. We
    propose an algorithm for constructing hamiltonian graphs with $n$
    vertices and maximum degree $\Delta$ and diameter $O (\log n)$,
    where $n$ is an arbitrary number. The number of edges is asymptotically bounded by
    $( 2 - \frac{1}{\Delta - 1} - \frac{(\Delta - 2)^2}{(\Delta - 1)^3}) n$.
    In particular, we construct a family of hamiltonian
    graphs with diameter at most $2 \lfloor \log_2 n \rfloor$,
    maximum degree $3$ and at most $1+11n/8$ edges.
\end{abstract}

{\bf {Keywords:}} hamiltonian cycle, token ring, diameter, binary
tree, graph algorithm \rz

        \footnotetext[3]
        {
        Corresponding author. If possible, send your correspondence via e-mail.
        Otherwise, postal address is:
        Department of Mathematics and Informatics, Faculty of Sciences and Mathematics,
        Vi\v segradska 33, 18000 Ni\v s, Serbia
        }

\section{Introduction}


An undirected graph $G = (V, E)$ can be used as a mathematical model
for computer networks, where $V$ is the set of vertices and $E$ is
the set of edges. The number of edges adjacent to a vertex $v$ is
called the degree of the vertex~$v$. A graph is regular if all
vertices have equal degrees. The distance $d (v, u)$ between two
vertices $v$ and $u$ is the number of edges on a shortest path
between $v$ and $u$. The diameter $D$ of the graph is the maximum
distance between any pair of vertices: $D = \max_{v, u \in V} d(v,
u)$. A cycle is a sequence of three or more vertices such that two
consecutive vertices are adjacent and with no repeated vertices
other than the start and end vertex. A \emph{hamiltonian cycle} is a
cycle that visits each vertex of a graph exactly once. A graph $G$
is $1$-hamiltonian if, after removing an arbitrary vertex or an
edge, it still remains hamiltonian. A $1$-hamiltonian graph G is
optimal if it contains the least number of edges among all
$1$-hamiltonian graphs with the same number of vertices as $G$. \rz


Networks with at least one ring structure (hamiltonian cycle) are
called loop networks. Distributed loop networks are extensions of
ring networks and are widely used in the design and implementation
of local area networks and parallel processing architectures. There
are many mutually conflicting requirements when designing the
topology of a computer network. For example, no pair of processors
should be too far apart in order to support efficient parallel
computation demands. The hamiltonian property is one of the major
requirements. The \emph{token passing} is a channel access method
where data is transmitted sequentially from one ring station to the
next with a control token circulating around the ring controlling
access. \rz

An open problem considered in a survey $\cite{bermond95}$ on
distributed loop networks is following: Find hamiltonian networks,
$\Delta$-regular on $n$ vertices with a diameter of order $O (\log
n)$. This problem is related to the famous $(n, \Delta, D)$ problem
in which we want to construct a graph of $n$ vertices with maximum
degree $\Delta$ such that the diameter $D$ is minimized, but
hamiltonicity is not an issue. The lower bound on the diameter $D$
is called the \emph{Moore bound},
$$D \geqslant \log_{\Delta - 1} n - \frac{2}{\Delta}.$$


Harary and Hayes $\cite{harary96}$ presented a family of optimal
$1$-hamiltonian planar graphs on $n$ vertices. Wang, Hung and Hsu
$\cite{wang98a}$ presented another family of optimal $1$-hamiltonian
graphs, each of which is planar, hamiltonian, cubic, and of diameter
$O (\sqrt{n})$. In the literature three other families of cubic,
planar and optimal $1$-hamiltonian graphs with diameter $O (\log n)$
are described. These constructions are possible only for special
choices of $n$, as shown in Table \ref{optimalGraphs}. \rz

\begin{table}[ht]
\centering \small
    \begin{tabular}{ l l l l l }
    \toprule

    Reference & Name & $n$ & $D$ & Comment \\[0.5ex]
    \midrule
    Wang et al. $\cite{wang98}$ & Eye graph & $6 \cdot 2^s - 6$ & $O (\log n)$ & $3$-connected \\
    Hung et al. $\cite{hung99}$ & Christmas tree & $3 \cdot 2^s - 2$ & $2s$ & hamiltonian connected \\
    Kao et al. $\cite{kao03}$ & Brother tree & $6 \cdot 2^s - 4$ & $2s + 1$ & bipartite
    \\ [0.5ex]
        \bottomrule
    \end{tabular}
\label{optimalGraphs}

\caption{Families of $1$-hamiltonian graphs and diameter $O (\log
n)$}
\end{table}

The best constructions for cubic graphs have diameter $1.413 \log_2
n$ (see $\cite{capablo03}$). It is shown in $\cite{bollobas88}$ that
a cubic graph obtained by adding a random perfect matching to a
cycle has a diameter of order $O (\log n)$. In the same paper, the
authors proved the following result:

\begin{te}
    Suppose $T$ is a complete binary tree on $2^k - 1$ vertices. If
    we add two random matchings of size $2^{k - 2}$ to the leaves of
    $T$, then the resulting graph has diameter $D$ satisfying
    $$\log_2 n - 10 \leqslant D (G) \leqslant \log_2 n + \log_2
    \log_2 n + 10$$
    with probability approaching $1$ as $n$ approaches infinity.
\end{te}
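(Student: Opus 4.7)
The lower bound is essentially deterministic. Since every vertex has degree at most $3$---the root has degree $2$, an internal non-root vertex has $1+2=3$, and each leaf has $1+2=3$---the Moore bound gives $n \le 1 + 3(2^D-1) = 3\cdot 2^D - 2$, whence $D \ge \log_2((n+2)/3) > \log_2 n - 2$, far stronger than the claimed $\log_2 n - 10$. No matching-randomness is needed for this direction.

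For the upper bound I would use the standard two-ball-meet BFS argument. Writing $L$ for the leaf set of $T$ (so $|L|=2^{k-1}$), set
\[
r_0 := \bigl\lceil \tfrac{1}{2}(\log_2 n + \log_2\log_2 n) \bigr\rceil + 5,
\qquad
s := \sqrt{n\log_2 n}.
\]
The plan is to show (i) for every fixed vertex $u$, with probability $1-o(n^{-2})$ one has $|B(u,r_0) \cap L| \ge s$; and (ii) for every fixed ordered pair $(u,v)$, conditional on this event holding for both endpoints, the probability that neither $M_1$ nor $M_2$ contains an edge between $B(u,r_0)\cap L$ and $B(v,r_0)\cap L$ is at most $\exp(-\Omega(s^2/|L|))=o(n^{-2})$. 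A union bound over the $\binom{n}{2}$ pairs then yields $D(G) \le 2r_0+1 \le \log_2 n + \log_2\log_2 n + 10$ with probability $1-o(1)$.

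To prove (i) I would run BFS from $u$ with deferred decisions on the matchings: at each step, expose only the $M_1$- and $M_2$-partners of those leaves that have just entered the ball. Conditional on the exposure history, each such partner is uniform over the unexposed leaves, hence lands outside the current ball (of leaf-count $m$) with probability at least $1 - 2m/|L|$. Tree-expansion alone---going up to a suitable ancestor of $u$ and then branching back down---plants $\Omega(2^{r_0/2})$ leaves in the ball within the first half of the BFS, after which the remaining rounds roughly double the leaf-count per step; a Chernoff/Azuma concentration on the resulting exposure martingale bounds the deviation by $o(n^{-2})$. Claim~(ii) is a brief computation: since only $O(s)$ matching edges were pre-exposed, the residual portions of $M_1$ and $M_2$ are uniform on the remaining leaves, so the probability that no edge of $M_1\cup M_2$ crosses between two prescribed leaf sets of size $\ge s$ is bounded by $(1-s/|L|)^{2s} \le \exp(-\Omega(s^2/|L|))$, as required.

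The main technical obstacle is (i). The BFS trajectory differs substantially depending on whether $u$ lies high in the tree (where tree-expansion dominates and the matchings play no role for the first several steps) or near the leaves (where matching edges drive the growth almost immediately); both regimes must be handled uniformly, and the initial non-exponential growth phase must be absorbed into the additive constant $+5$ in $r_0$. The correlations between matching edges are defused by the lazy-exposure strategy, and concentration of the leaf-count then follows from a standard martingale argument.
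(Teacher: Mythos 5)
First, note that the paper does not prove this theorem at all: it is quoted verbatim from Bollob\'as and Chung \cite{bollobas88}, so there is no in-paper proof to compare against. Your lower-bound argument is fine (and indeed gives the stronger deterministic bound $D \ge \log_2 n - 2$ via the Moore bound for maximum degree $3$). The problem is in the upper bound: your claim (i) is \emph{false} for vertices that sit high in the tree. In a complete binary tree all leaves lie at depth $k-1$, and a vertex $u$ at depth $d$ has its nearest leaf at distance exactly $h:=k-1-d$ (going up $a$ levels and back down only reaches leaves at distance $h+2a$). Hence $B(u,r)\cap L=\emptyset$ whenever $r<h$; in particular for $u$ the root, $B(u,r_0)$ contains \emph{no} leaves at all, since $r_0\approx\tfrac12(\log_2 n+\log_2\log_2 n)+5 < k-1\approx\log_2 n$. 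Your remark that ``tree-expansion alone plants $\Omega(2^{r_0/2})$ leaves in the ball'' for high vertices is therefore wrong, and no uniform radius $r_0$ can make (i) true for all $u$. The symmetric two-ball scheme cannot be repaired by tuning constants: it has to be replaced by an asymmetric version in which one shows $|B(u,t)\cap L|\gtrsim 2^{t}$ for all $t$ with $h_u\le t\le k-1$, demands only that the \emph{product} of the two leaf-set sizes exceed $Cn\log n$ (so $t_u+t_v\le\log_2 n+\log_2\log_2 n+O(1)$ with $t_u\ge h_u$, $t_v\ge h_v$), and, for pairs with $h_u+h_v>\log_2 n+\log_2\log_2 n$, abandons the matchings entirely and routes through the root using only tree edges, giving $d(u,v)\le 2(k-1)-h_u-h_v<\log_2 n-\log_2\log_2 n$. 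Without this case split the argument genuinely fails for, e.g., $u$ of height about $\tfrac34\log_2 n$ paired with a leaf.

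Two smaller points. Your final arithmetic does not close: with $r_0=\lceil\tfrac12(\log_2 n+\log_2\log_2 n)\rceil+5$ you get $2r_0+1\le\log_2 n+\log_2\log_2 n+13$, not $+10$, so the additive slack must be rebudgeted. And the per-step doubling of the leaf count is more delicate than ``each new leaf contributes two fresh matching partners'': a leaf entered via a matching edge yields only \emph{one} fresh matching partner at the next level, and the growth rate $2$ is recovered only by summing over all tree-detours (siblings, cousins, \dots), i.e.\ from the recursion $m_{t+1}=m_t+2\sum_{j\ge1}2^{j-1}m_{t-2j}$, whose characteristic root is $2$. This is provable, but it is exactly where the constant absorbed into your ``$+5$'' has to be earned, and your sketch does not yet do so.
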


In $\cite{robinson92}$ it is shown that almost all $k$-regular
graphs are hamiltonian for any $k \geqslant 3$, by an analysis of
the distribution of $1$-factors in random regular graphs. \rz


In this paper we propose an algorithm for constructing hamiltonian
graphs with $n$ vertices, maximum degree $\Delta$ and diameter
bounded by:
$$D \leqslant 2 \cdot \lfloor \log_{\Delta - 1} n \rfloor.$$

Our main contribution is that we assure diameter $O (\log_{\Delta -
1} n)$ for every~$n$, not just for special values, while
hamiltonicity and small diameter are achieved by using significantly
less edges. \rz

The paper is organized as follows. In Section 2 we describe the
linear algorithm and prove that it produces hamiltonian graphs of
degree at most~$\Delta$ and diameter at most $2\log_{\Delta-1} n$.
In Section 3 we deal with the case $\Delta = 3$ and improve the
upper bound for the number of edges in the resulting graph to
$\lfloor \frac{11n+6}{8}\rfloor$. The same approach may be applied
for $\Delta > 3$ and this way we get a hamiltonian graph with the
average degree asymptotically equal to $4 - \frac{4}{\Delta - 1}$.
In the end, we show that the algorithm may be modified in such a way
that it constructs a planar hamiltonian graph with degree at most
$\Delta$ and diameter at most $2 \lfloor \log_2 n \rfloor$ and also
point out some experimental results for diameter when $\Delta = 3$.

\section{The algorithm}

A complete binary tree is a tree with $n$ levels, where for each
level $d \leqslant n - 1$, the number of the existing nodes at level
$d$ equals $2^d$. This means that all possible nodes exist at these
levels. An additional requirement for a complete binary tree is that
for the $n$-th level, while not all nodes have to exist, the nodes
that do exist must fill the level from left to right (for more
details see \cite{cormen01}). A complete binary tree is one of the
most important architectures for interconnection networks
\cite{leighton92}. \rz

A generalization of binary trees are $\Delta$-ary trees. Namely,
every vertex has at most $\Delta - 1$ children, and all vertices
that are not in the last level have exactly $\Delta - 1$ children.
All vertices in the last level must occupy the leftmost spots
consecutively. \rz

We propose the following algorithm for constructing hamiltonian
graphs of order $n$, maximum degree $\Delta$ and diameter $\leq 2
\log_{\Delta - 1} n$. First we construct a complete $\Delta$-ary
tree with $n$ vertices, and for every vertex we store its degree,
parent, and all children from left to right. A labeled $\Delta$-ary
tree contains labels $1$ through $n$ with root being $1$, branches
leading to nodes labeled $2, 3, \ldots, \Delta$, branches from these
leading to $\Delta + 1, \Delta + 2, \ldots, (\Delta - 1)(\Delta - 1)
+ \Delta$, and so on. We also maintain a queue of all leaves. \rz

\begin{algorithm}
    \small
    \KwIn{$\Delta$-ary rooted tree with $n$ nodes}
    \KwOut{Hamiltonian cycle in an array $cycle$, $m$ is the total number of edges}

    $v = root$\;
    $k = 1$\;
    $m = n - 1$\;
    \While{$k \leqslant n$} {
        $cycle [k] = v$\;
        $mark [v] = true$\;
        \uIf{$parent [v]$ is not marked} {
            $v = parent [v]$;
        }
        \uElseIf{ there exists a child $u$ from the set of neighbors which is not marked} {
            $v = u$;
        }
        \Else {
            Let $u$ be a random unmarked leaf\;
            Add edge $(v, u)$\;
            $m = m + 1$\;
            $v = u$\;
        }
        $k = k + 1$\;
    }
    Add edge $(v, root)$\;
    $m = m + 1$\;
    \caption{Constructing a hamiltonian supergraph of a complete $\Delta$-ary tree}
\end{algorithm}

We will traverse vertices in order to form a hamiltonian cycle. The
starting vertex is the root of the tree. First we try to go up the
tree through the parent of the current vertex - if the parent is
already visited, we choose one of its children. In the case where
all neighbors of the current vertex are marked, we pick an arbitrary
unmarked leaf and add an edge connecting these two vertices.

\begin{figure}[h]
  \center
  \includegraphics [width = 8cm]{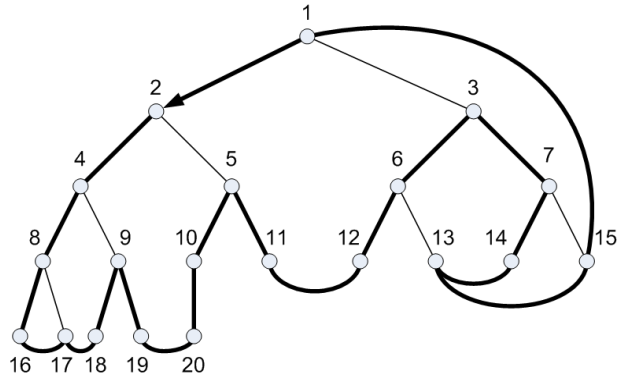}
  \caption { \textit{Example of algorithm execution for a graph with $n = 20$, $\Delta = 3$ and vertex $1$ as the root} }
  \label{example}
\end{figure}

\begin{te}
\label{th-2.1}
    A graph constructed by the above algorithm is hamiltonian.
\end{te}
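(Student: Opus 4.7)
The plan is to show that the loop produces a sequence $cycle[1], \dots, cycle[n]$ of $n$ pairwise distinct vertices with every consecutive pair joined by an edge of the constructed graph; together with the closing edge $(cycle[n], root)$ this is the desired hamiltonian cycle.

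Two of the three requirements are immediate. Distinctness holds because each of the three branches moves the current vertex $v$ to an unmarked target---unmarked parent, unmarked child, or unmarked leaf picked in the \texttt{else} branch---so no vertex is marked twice. Adjacency of consecutive terms holds because the up/down branches traverse pre-existing tree edges, the \texttt{else} branch explicitly inserts the jump edge $(v,u)$ before moving, and the closing edge $(cycle[n], root)$ is added after the loop exits.

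The hard part is to show that the loop actually runs to completion: whenever the \texttt{else} branch is entered at an iteration $k<n$, an unmarked leaf must still exist. I would argue by contradiction. Suppose that at some such iteration $k$ every leaf of the $\Delta$-ary tree is already marked. Since $k<n$, some vertex is unmarked, and it must be internal; let $v_0$ be an unmarked internal vertex of maximum depth. Every strict descendant of $v_0$ is then marked (the leaves by assumption, the deeper internal vertices by the choice of $v_0$), so in particular every child of $v_0$ is marked.

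The core is a short timing argument. Consider any child $c$ of $v_0$: if $c$ is internal, it must have been visited via a walk-up from one of $c$'s own children (a walk-down from $v_0$ would require $v_0$ to be marked, and jumps only hit leaves); if $c$ is a leaf, it can only have been reached by a jump. In both cases, the iteration immediately after $c$ is marked moves the token up to the still-unmarked parent $v_0$, so $v_0$ would be marked at iteration $k_c+1$, where $k_c$ is the iteration at which $c$ is marked. For $v_0$ to remain unmarked at iteration $k$, one needs $k_c=k$, i.e.\ $c=cycle[k]=v$. Since $v$ is unique, $v_0$ can have only one child and that child is $v$; but then the parent of $v$ equals $v_0$, which is unmarked---contradicting the fact that the \texttt{else} branch is only entered when $parent[v]$ is marked. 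Hence an unmarked leaf is always available, the loop completes all $n$ iterations, and the algorithm outputs a hamiltonian cycle.
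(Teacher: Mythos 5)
Your proof is correct, and it rests on the same underlying mechanism as the paper's argument --- the walk-up-first rule forces a vertex's markedness to propagate immediately to any still-unmarked ancestor --- but you package it differently. The paper argues in the downward direction: an unvisited vertex must have an entirely unvisited subtree (a visited descendant would have walked up to it), hence an unvisited leaf exists whenever any vertex is unvisited, so the \texttt{else} branch never runs dry. You instead fix a deepest unmarked internal vertex $v_0$ at the moment the \texttt{else} branch allegedly fails and run an upward timing argument on its children, concluding that the only way $v_0$ can still be unmarked is that its child is the current vertex $v$ itself, which contradicts the branch condition that $parent[v]$ is marked. Your version has the merit of isolating the precise failure mode to be excluded (the request for an unmarked leaf when none remains), which the paper's phrasing ``we have to visit this leaf during the execution'' leaves somewhat implicit; you also spell out the distinctness and adjacency of consecutive cycle entries, which the paper takes for granted. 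One minor remark: your case analysis of \emph{how} a child $c$ of $v_0$ was first reached is not actually needed, since the timing step only uses what the algorithm does at the iteration in which $c$ is marked (it checks $parent[c]=v_0$ and moves there), regardless of how the token arrived at $c$.
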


\begin{proof}
The basic idea of the algorithm is to traverse the hamiltonian path
by adding edges when they are needed. In the end, we join the last
visited vertex to the root of the binary tree. There can be at most
one vertex other than the root with a degree greater than $1$ and
less than $\Delta$ in the tree - let this vertex have the label $s$.
All edges that we add during the execution of the algorithm connect
either two leaves or a leaf and $s$. By symmetry, we can eliminate
the vertex $s$ by traversing from the root to $s$ in the first few
steps of the algorithm. Therefore, this algorithm does not increase
the maximum degree $\Delta$ in the graph. \rz

We cannot visit any vertex twice, so we have to prove that all
vertices are marked. Assume that the vertex $v$ is not visited and
the algorithm has finished. If any vertex in its subtree is visited,
then we would have already visited~$v$, because we first try to go
upwards. Therefore, all vertices in its subtree are unvisited. There
is at least one leaf in the subtree, and we have to visit this leaf
during the execution of the algorithm. This is a contradiction, so
the constructed graph is hamiltonian.
\end{proof}

The diameter of this graph is less than or equal to twice the depth
of the tree.
$$D (G) \leqslant 2 \cdot \lfloor \log_{\Delta - 1} n \rfloor.$$

\begin{lemma}
    The number of leaves $L (n)$ in a complete $\Delta$-ary tree is
    $$L (n) = \left \{
    \begin{array}{ll}
        n - \lfloor \frac {n}{\Delta - 1} \rfloor , & \quad
        \mbox{ if } n \equiv 1 \pmod{ (\Delta - 1) } \\
        n - 1 - \lfloor \frac {n}{\Delta - 1} \rfloor , & \quad
        \mbox{ if } n \not \equiv 1 \pmod{ (\Delta - 1) } \\
    \end{array} \right.$$
\end{lemma}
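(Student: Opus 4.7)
The plan is to count the internal (non-leaf) vertices, call this $I(n)$, and then derive $L(n) = n - I(n)$. The structural observation I would rely on is that in a complete $\Delta$-ary tree, every internal vertex has exactly $\Delta-1$ children, with the sole possible exception of one ``boundary'' vertex whose children in the deepest level have not all been created. This follows from the left-to-right filling rule for the last level: any internal vertex that comes before the boundary (in the standard level-order traversal) has all $\Delta-1$ child-slots filled, while any vertex in the same level that comes after it has no children and is itself a leaf.

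Using the identity that the sum of children counts over all vertices equals the number of tree edges $n-1$, I would then write
\[
(\Delta-1)(I-1) + c = n-1,
\]
where $c \in \{1,2,\ldots,\Delta-1\}$ denotes the number of children of the boundary vertex, with the convention that $c=\Delta-1$ corresponds to the case where no partial vertex exists. This forces $c \equiv n-1 \pmod{\Delta-1}$. In the case $n \equiv 1 \pmod{\Delta-1}$ we get $c = \Delta-1$ and $I = (n-1)/(\Delta-1) = \lfloor n/(\Delta-1) \rfloor$, yielding $L(n) = n - \lfloor n/(\Delta-1) \rfloor$ as claimed. Otherwise $c \in \{1,\ldots,\Delta-2\}$ is determined uniquely by the congruence, and a short floor-arithmetic computation rewrites $I = 1 + (n-1-c)/(\Delta-1)$ in the form $1 + \lfloor n/(\Delta-1) \rfloor$, giving $L(n) = n - 1 - \lfloor n/(\Delta-1) \rfloor$.

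The main obstacle I expect is the bookkeeping for the floor expressions when translating from $(n-1-c)/(\Delta-1)$ to $\lfloor n/(\Delta-1) \rfloor$; this needs a brief sub-case analysis according to $n \bmod (\Delta-1)$, which has to be done carefully to verify that the two displayed branches of the formula truly cover all residue classes. A secondary point worth justifying is the ``at most one partial internal vertex'' claim itself, which I would prove directly from the leftmost-consecutive-filling convention in the definition of a complete $\Delta$-ary tree, observing that the boundary vertex is the unique vertex whose last child (if any) is precisely the last-placed leaf of the tree.
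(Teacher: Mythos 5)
Your counting argument is a genuinely different (and more direct) route than the paper's, which derives the recurrence $L(n)=L(n-\Delta+1)+\Delta-2$ from the way leaves are created and then appeals to induction. The structural core of your argument is sound: at most one internal vertex has fewer than $\Delta-1$ children, and summing children over the $I$ internal vertices gives $(\Delta-1)(I-1)+c=n-1$ with $c\in\{1,\dots,\Delta-1\}$ forced to satisfy $c\equiv n-1 \pmod{\Delta-1}$. The failure is exactly in the step you flagged as the ``main obstacle'': the claim that $I=1+(n-1-c)/(\Delta-1)$ can be rewritten as $1+\lfloor n/(\Delta-1)\rfloor$ for every $n\not\equiv 1\pmod{\Delta-1}$. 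Since $n-1-c=(\Delta-1)\lfloor (n-1)/(\Delta-1)\rfloor$, what your identity actually gives is $I=1+\lfloor (n-1)/(\Delta-1)\rfloor$, and $\lfloor (n-1)/(\Delta-1)\rfloor=\lfloor n/(\Delta-1)\rfloor$ holds only when $(\Delta-1)\nmid n$. When $n\equiv 0\pmod{\Delta-1}$ you instead get $I=\lfloor n/(\Delta-1)\rfloor$ and hence $L(n)=n-\lfloor n/(\Delta-1)\rfloor$, not $n-1-\lfloor n/(\Delta-1)\rfloor$. So the deferred sub-case analysis does not close --- and it cannot, because on that residue class the displayed formula is false.

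Concretely, for $\Delta=3$ and $n=4$ the complete binary tree has two leaves while the second branch gives $4-1-2=1$; for $\Delta=4$ and $n=6$ there are four leaves while the formula gives $3$. (The paper's own recurrence already yields $L(4)=L(2)+1=2$, contradicting its displayed formula, so the slip lies in the lemma's statement rather than in your structural idea.) Carried through correctly, your method shows that the branch $L(n)=n-\lfloor n/(\Delta-1)\rfloor$ covers both $n\equiv 1$ and $n\equiv 0\pmod{\Delta-1}$, the second branch applies only to the remaining residues, and a single closed form valid for all $n\geqslant 2$ is $L(n)=n-1-\lfloor (n-2)/(\Delta-1)\rfloor$. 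As written, however, your proof asserts an equality of floors that fails on an entire residue class, so it does not establish the statement as given.
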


\begin{proof}
    We know that $L (1) = 1$ and $L (k) = k - 1$ for $k = 2, 3,
    \ldots, \Delta - 1$. Whenever we properly add $\Delta - 1$ vertices
    to this tree, we always get exactly $\Delta - 2$ new leaves. This
    proves the recurrent relation for the number of leaves
    $$L (n) = L (n - \Delta + 1) + \Delta - 2.$$

    By mathematical induction, one can easily prove the explicit formula
    for~$L (n)$.
\end{proof}

\begin{te}
    The number of edges in the constructed graph is less than
    $(2 - \frac{1}{\Delta - 1}) n + \frac {\Delta - 3}{2}$.
\end{te}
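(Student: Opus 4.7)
The total edge count decomposes as $m = (n-1) + a$, where $n-1$ accounts for the tree edges and $a$ counts the edges inserted by the algorithm (every dead-end jump together with the closing edge $(v_n, v_1)$); the theorem thus amounts to an upper bound on $a$.

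The plan is to split the Hamiltonian cycle $v_1, \ldots, v_n$ into \emph{phases}, where a phase is a maximal consecutive block of vertices linked by tree edges. Two successive phases are separated by one inserted edge, and one further inserted edge closes the cycle, so if there are $P$ phases then $a = P$. I would then charge each phase to a distinct leaf. The first phase starts at the root and descends only to unmarked children, and since no other subtree has yet been entered, every internal vertex visited -- including the special sub-saturated vertex $s$ from the proof of Theorem~\ref{th-2.1}, whose children are still unmarked at this stage -- has at least one unmarked child, so the descent can terminate only at a leaf. Every later phase begins at the vertex chosen by the else-branch of the algorithm, which by construction is an unmarked leaf. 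Since no vertex is visited twice, these $P$ distinguished leaves are pairwise distinct, whence $P \le L$.

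Applying the preceding Lemma (or equivalently the counting bound $I \ge \lceil (n-1)/(\Delta - 1) \rceil$ on the number of internal vertices), one obtains $L \le (n(\Delta - 2) + 1)/(\Delta - 1)$. Substituting into $m = n - 1 + a \le n - 1 + L$ and simplifying yields $m \le (2 - \tfrac{1}{\Delta - 1})n - \tfrac{\Delta - 2}{\Delta - 1}$. A short algebraic check shows $-\tfrac{\Delta - 2}{\Delta - 1} < \tfrac{\Delta - 3}{2}$ whenever $\Delta \ge 3$ (it reduces to $\Delta^2 - 2\Delta - 1 > 0$), so the strict inequality of the theorem follows.

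The most delicate step, I expect, is verifying the "each phase contains a distinct leaf" claim with full rigour -- in particular handling the first phase's interaction with $s$ and any trivial phases in which the jumped-to leaf is itself an immediate dead-end, so that the phase-to-leaf assignment is genuinely injective.
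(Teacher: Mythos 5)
Your proof is correct, and it takes a genuinely different route from the paper's. The paper argues via a degree sum: after the algorithm runs, every leaf has degree at most $3$ and every internal vertex degree at most $\Delta$, so $2|E| \leq 3L + \Delta(n-L)$, and the leaf-count lemma finishes the computation. You instead count the inserted edges directly: decomposing the Hamiltonian cycle into $P$ maximal tree-edge runs gives exactly $P$ inserted edges (the $P-1$ jumps plus the closing edge), and your injection of phases into leaves (first phase $\mapsto$ its terminal leaf, every later phase $\mapsto$ its initial leaf, all pairwise distinct since each such leaf is unmarked at the moment it is distinguished) yields $a = P \leq L$ and hence $m \leq n - 1 + L$. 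The two arguments produce the same leading coefficient $2 - \frac{1}{\Delta-1}$ --- unsurprisingly, since the paper's degree bound on leaves encodes the same fact that added edges touch only leaves (and the sub-saturated vertex $s$) --- but your additive constant $-\frac{\Delta-2}{\Delta-1}$ is slightly sharper than the paper's $+\frac{\Delta-3}{2}$, and your version makes the structural reason for the bound explicit (one extra edge per jump) rather than folding it into a degree count. The delicate points you flag are the right ones, and both check out: the first phase is a pure descent from the root during which every internal vertex on the path, including $s$, still has all of its children unmarked, so the descent can only terminate at a leaf; and a degenerate one-vertex phase is still charged to the fresh leaf it jumps to, so the phase-to-leaf assignment stays injective.
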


\begin{proof}
    There are $\frac{n\Delta}{2}$ edges in a $\Delta$-regular
    graph. The internal nodes in a $\Delta$-ary tree are nodes with degree greater than one.
    After running the algorithm, every leaf will have degree at most
    three and every internal node will have degree at most $\Delta$.
    This gives an upper bound for the number of edges in the hamiltonian graph
    constructed by the algorithm:
$$
    |E| < \frac{n \Delta}{2} - \frac {\Delta - 3}{2} \left
    ( n - \left \lfloor \frac{n}{ \Delta - 1} \right \rfloor - 1 \right
    ) \leqslant \frac{3n}{2} + \frac {\Delta - 3}{2} \left (\frac
    {n}{\Delta - 1} + 1 \right ) = \left(2 - \frac{1}{\Delta -
    1} \right) n + \frac{\Delta - 3}{2}.
$$
\end{proof}

This bound is less than $\frac{5n}{3}$ for $\Delta = 4$, and less
than $2n$ if $\Delta > 4$ and the depth of the tree is greater
than~two. Based on this fact, we give the following

\begin{pro}
Time and memory complexity of the proposed algorithm is linear $O
(n)$. \QED
\end{pro}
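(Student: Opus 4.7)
The plan is to split the analysis into a memory part and a time part, treating $\Delta$ as a constant so that per-vertex data of size $O(\Delta)$ contributes only a constant factor. I assume the complete $\Delta$-ary tree is stored with, for each vertex, its degree, its parent pointer, and a list of its children, and that the leaves are stored in an auxiliary FIFO queue.

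\textbf{Memory.} Since the total number of children across all vertices equals $n-1$, the tree representation takes $O(n)$ words. In addition the algorithm maintains a boolean array $mark[\cdot]$ of size $n$, the leaf queue of size at most $L(n)\leqslant n$, and the output array $cycle[\cdot]$ of length $n$. Summing these gives $O(n)$.

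\textbf{Time.} The labeled $\Delta$-ary tree together with the initial leaf queue can be produced in a single pass over the labels $1,\ldots,n$, since the parent of vertex $i$ is given by a simple arithmetic formula in $\Delta$; this takes $O(n)$ time. The main \textbf{while} loop is entered exactly $n$ times because $k$ starts at $1$ and is incremented once per iteration. Inside one iteration, testing whether $parent[v]$ is marked is $O(1)$, and scanning the children of $v$ for an unmarked one takes $O(\Delta)=O(1)$. The only step that is not obviously $O(1)$ is ``let $u$ be an unmarked leaf'': I would implement this by popping vertices from the front of the leaf queue, discarding any that the traversal has already marked, until an unmarked leaf appears. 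Since each of the at most $L(n)$ leaves can be popped and discarded at most once during the whole execution, the discards contribute a total of $O(n)$ across all iterations; the remaining work inside the branch (adding one edge and updating $v$) is $O(1)$. Closing the cycle with the final edge $(v, root)$ is also $O(1)$. Adding these gives $O(n)$ overall.

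\textbf{Main obstacle.} The only non-immediate point is verifying that the ``pick an unmarked leaf'' operation does not accumulate to a superlinear cost: a naive per-iteration scan of the leaves would give $\Theta(n^2)$. The amortized argument above, which relies on maintaining the leaves in a queue and charging every discard to a distinct leaf, is what makes both the time and memory bounds linear.
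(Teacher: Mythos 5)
Your proof is correct, and it is in fact more complete than what the paper offers: the Proposition is stated with no proof at all, the only justification being the preceding remark that the number of edges is bounded by roughly $2n$, together with the earlier setup sentence ``we also maintain a queue of all leaves.'' Your memory accounting and your observation that the while loop runs exactly $n$ times (each vertex becomes the current vertex once and is then marked) match the paper's intent, and your amortized analysis of the leaf-selection step --- pop from the leaf queue, discard already-marked leaves, charge each discard to a distinct leaf --- is precisely the missing ingredient that turns the naive $\Theta(n^2)$ bound into $O(n)$; it is also consistent with the queue the paper says it maintains. One minor caveat: the pseudocode says ``a \emph{random} unmarked leaf,'' and a FIFO pop yields an arbitrary rather than uniformly random unmarked leaf. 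Since Theorem 2.1 and the edge bounds only need an arbitrary choice (the paper's prose itself says ``we pick an arbitrary unmarked leaf''), this does not affect correctness; if genuine uniformity were required one would instead keep the unmarked leaves in an array with swap-with-last deletion, still $O(1)$ per operation. Note also that the paper's later $O(n^2)$ figure refers to a different, farthest-leaf heuristic, not to this basic algorithm, so there is no conflict with your bound.
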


\section{Number of edges added for $\Delta=3$}

Cubic graphs are of special interest in token ring topologies and
because of their importance, in this section we improve the previous
estimations for the number of edges added in the construction of a
hamiltonian path for the case $\Delta = 3$. \rz

First we examine graphs with $n = 2^{k + 1} - 1$ vertices, where $k
\geqslant 1$. A corresponding binary tree has complete last level
and there are $2^k$ leaves in the tree. Let $f (k)$ be the number of
additional edges added to this tree, when we start the execution of
the algorithm in an arbitrary leaf and end it in another leaf. One
can easily verify that $f (1) = 0$ and $f (2) = 2$. In addition, we
define $f(0) = 0$. After traversing upwards to the root and then
downwards to an arbitrary leaf, the graph induced by the unvisited
vertices are disjoint complete binary trees of heights $0, 1,
\ldots, k - 2$. Therefore, we may write the recurrent formula:
$$f (k) = 2 (k - 1) + 2 \cdot \sum_{i = 0}^{k - 2} f (i).$$

By strong induction we will prove that:
$$f (k) = \left \{
\begin{array}{ll}
         \frac {2}{3} \left ( 2^k - 1 \right ), & \quad \mbox{if $k$ is even} \\
         \frac {2}{3} \left ( 2^k - 2 \right ), & \quad \mbox{if $k$ is odd} \\
\end{array} \right.$$

Assume that the formula holds for all numbers $m < 2k$, and now we
will prove it for $2k$ and $2k + 1$.
\begin{flalign*}
f (2k) &= 2 (2k - 1) + 2 \cdot \sum_{i = 0}^{k - 1} f (2i) + 2 \cdot
\sum_{i = 1}^{k - 1} f (2i - 1)
= 4k - 2 + \frac{4}{3} \cdot \sum_{i = 0}^{2k - 2} 2^i -
\frac{4}{3} \cdot (3k - 2) = \frac{2}{3} \cdot (2^{2k} - 1). \\
f (2k + 1) &= 2 (2k + 1 - 1) + 2 \cdot \sum_{i = 0}^{k - 1} f (2i) +
2 \cdot \sum_{i = 1}^{k} f (2i - 1)
= 4k + \frac{4}{3} \cdot \sum_{i = 0}^{2k - 1} 2^i - \frac{4}{3}
\cdot 3k = \frac{2}{3} \cdot (2^{2k + 1} - 2).
\end{flalign*}

To estimate the actual number of additional edges in the proposed
algorithm we have to start from the root and add an additional edge
connecting the last visited vertex and the root of the binary tree.
After visiting the first leaf we have that the unvisited vertices
form disjoint complete binary trees of heights $0, 1, 2, \ldots, k -
2$ and $k - 1$. So, the total number of edges to be added equals:
$$f (0) + f (1) + \ldots + f (k - 1) + k = \frac {f (k + 1)}{2} =
\left \lfloor \frac{n}{3} \right \rfloor.$$

Therefore, we proved the following result.
\begin{te}
    The number of additional edges in the case of a complete binary tree
    with $2^k - 1$ vertices does not depend on the choice of random
    leaves in the algorithm and this number equals $\frac {f (k +
    1)}{2}$. \QED
\end{te}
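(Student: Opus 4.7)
The plan is to formalise the calculation sketched in the paragraph immediately before the theorem by a strong induction on $k$, whose central content is the auxiliary lemma that $f(k)$ itself is well defined: the number of edges a leaf-to-leaf run adds to a complete binary tree of height $k$ does not depend on the random unmarked leaves picked along the way. For $k \le 2$ the algorithm has essentially no freedom, so the base cases are immediate by inspection.

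For the inductive step, I would first verify what the text asserts about the forced portion of the run: starting at any leaf, the algorithm must climb to the root and then descend into the opposite subtree to another leaf, visiting $2k+1$ vertices without adding any new edge, and the unvisited vertices split into a disjoint union of complete binary subtrees, two of each height $0,1,\ldots,k-2$. From this point on, each such subtree is eventually entered through a single added edge and processed in situ, and what must be proved is that this in-place processing costs exactly $f(h)$ edges. This is the main obstacle: parent/child moves cannot leave a subtree prematurely because all non-subtree neighbours of its vertices sit above its root, but the random-leaf step could in principle jump between subtrees before one is exhausted. I would handle this by a second-level induction on the multiset of remaining subtrees, showing that the total added-edge count depends only on that multiset and not on the order in which subtrees are visited. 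Combined with the inductive hypothesis applied to each individual subtree, this yields the recurrence $f(k)=2(k-1)+2\sum_{i=0}^{k-2}f(i)$ that was already established, and simultaneously shows that $f(k)$ is independent of the random choices.

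The theorem itself then follows by applying the same decomposition at the root. The initial descent from the root to a leaf adds no edges, the remaining vertices form $k$ disjoint complete binary subtrees of heights $0,1,\ldots,k-1$, each is processed at cost $1+f(h)$ by the restriction argument above, and the closing edge $(v,\text{root})$ is added at the end. Collecting the contributions and applying the recurrence for $f(k+1)$ shows that the total is independent of the random leaf choices and equals $\tfrac12 f(k+1)$, as claimed.
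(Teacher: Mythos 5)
Your overall strategy is the same as the paper's: establish that $f(k)$ is well defined via the leaf-to-leaf subtree decomposition and the recurrence $f(k)=2(k-1)+2\sum_{i=0}^{k-2}f(i)$, then apply the root-based decomposition into one hanging subtree of each height $0,\ldots,k-1$. On the one point where care is actually needed you are more rigorous than the paper: the paper silently assumes each hanging subtree is processed to completion before the next is entered, whereas the random-leaf step may interleave them. Your observation that every edge added during the loop is a jump into some currently maximal unvisited complete subtree, and that a height-$h$ subtree, once entered, deterministically spawns two maximal subtrees of each height $0,\ldots,h-2$ regardless of the order of processing---so the total number of jumps depends only on the initial multiset of heights---is exactly the right way to close that gap.

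However, your final bookkeeping does not produce the constant you claim. You count $\sum_{h=0}^{k-1}\bigl(1+f(h)\bigr)$ for the $k$ subtrees and then add one more for the closing edge $(v,\mathrm{root})$, which gives $k+\sum_{h=0}^{k-1}f(h)+1$. But the recurrence yields exactly $\tfrac12 f(k+1)=k+\sum_{h=0}^{k-1}f(h)$, leaving no room for the extra $+1$; your tally is therefore $\tfrac12 f(k+1)+1$, not $\tfrac12 f(k+1)$. This cannot be absorbed: for $k\geq 2$ the traversal ends at a leaf of the last processed subtree, which is not adjacent to the root, so the closing edge is a genuinely new edge (a direct trace for $n=7$ gives $3$ added edges, while $\tfrac12 f(3)=2$). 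Either the closing edge must be excluded from the count of ``additional edges'' by convention, or the constant in the statement should be $\tfrac12 f(k+1)+1$ for $k\geq 2$. The paper's own displayed computation $f(0)+\cdots+f(k-1)+k=\tfrac12 f(k+1)$ quietly omits the closing edge even though its prose mentions it, so the discrepancy is inherited; but as written your last sentence asserts an identity that your own accounting contradicts, and you need to either fix the constant or state explicitly which edges you are counting.
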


Now assume that $n$ is not of the form $2^{k + 1} - 1$. The main
result in this section is the following theorem.

\begin{te}
    For every integer $n \in N$, there exists a hamiltonian
    graph with diameter at most $2 \lfloor \log_2 n \rfloor$,
    maximum degree $3$ and at most $\lfloor \frac{11n}{8} + \frac{3}{4} \rfloor$ edges.
\end{te}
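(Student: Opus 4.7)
The bounds on diameter and maximum degree are essentially inherited from Section~2: the graph we build contains the underlying binary tree of depth $\lfloor \log_2 n \rfloor$, so every pair of vertices is joined through the root by a path of length at most $2\lfloor \log_2 n \rfloor$; and every added edge is incident only to leaves or to the unique ``almost-full'' vertex $s$, so no vertex exceeds degree $3$. What remains is to bound the number of additional edges $a(n)$ (edges of the final graph that are not in the tree) by $\lfloor 11n/8 + 3/4 \rfloor - (n-1)$, i.e.\ roughly $3n/8$.

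\textbf{Proof strategy.} I would argue by strong induction on $n$, using the preceding Theorem~3.1 as the base: for $n = 2^{k+1} - 1$ it yields $a(n) = \lfloor n/3 \rfloor$, strictly below our target $3n/8$, leaving slack. For general $n$, I would split the tree $T$ at the root into its left and right subtrees $T_L$ and $T_R$ of sizes $\ell, r$ with $\ell + r + 1 = n$. Run the algorithm with the ``pick an arbitrary unmarked leaf'' step chosen by hand: it first traverses $T_L$ from the root's left child, then jumps to a leaf of $T_R$, traverses $T_R$, and finally closes back to the root. This contributes one inter-subtree jump and one closing edge on top of whatever is added within $T_L$ and $T_R$ recursively, giving $a(n) \le a(\ell) + a(r) + 2$. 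Applying the inductive hypothesis to each side and verifying that the resulting inequality closes to $\lfloor 11n/8 + 3/4 \rfloor - (n-1)$ is the bulk of the calculation.

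\textbf{Where the constant comes from.} The key asymmetry is that in the natural split, $T_L$ is typically a complete binary tree one level smaller than $T$ (covered by Theorem~3.1 with its better $n/3$ rate), while $T_R$ is the partial tree inheriting the irregular last level. The two subtrees thus have rather different edge-count rates, and $T_R$ is precisely where the worse $3n/8$ coefficient accrues. This is where I expect the $8$ in the denominator to emerge: one needs a case split on the residue of $n$ modulo a small power of $2$ (plausibly $8$) to identify which of $T_L, T_R$ is complete vs.\ partial and how the rounding interacts with the $+2$ gluing cost.

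\textbf{Main obstacle.} Pinning down the exact constant $11/8$---rather than something weaker like $4/3$ or $3/2$---is the crux, because one must track floors and the $+3/4$ slack carefully through the recursion. In particular, a handful of small-$n$ base cases will likely need to be checked directly to absorb the integer-rounding slop before the induction takes over. A secondary subtlety is that the chosen jump destinations must be genuine leaves (and not repeatedly the vertex $s$), so that the degree-$3$ cap is preserved; making this compatible with the optimal choice of leaves in the induction is a minor but non-trivial bookkeeping task.
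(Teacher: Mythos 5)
Your reduction to bounding the number of added edges by roughly $3n/8$ is correct, and the degree/diameter claims do carry over from Section~2. But the core of your plan — the root-split recursion $a(n)\le a(\ell)+a(r)+2$ closed by strong induction — does not work, and this is precisely the step you defer as ``the bulk of the calculation.'' If the inductive hypothesis is $a(n)\le \alpha n+\beta$, then the recursion with $\ell+r=n-1$ demands $\alpha(n-1)+2\beta+2\le \alpha n+\beta$, i.e.\ $\beta\le\alpha-2$. With $\alpha=3/8$ this forces $\beta\le-13/8$, and the resulting hypothesis $a(n)\le \tfrac{3n}{8}-\tfrac{13}{8}$ is false for every small tree (e.g.\ $n=3$ needs two added edges), so the induction collapses at the base. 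The refinement you hint at — noting that one side of the split is a perfect tree and invoking Theorem~3.1 for it, recursing only on the other side — fares better but still accumulates at least one extra jump edge per level of the (depth $\approx\log_2 n$) recursion, yielding a bound of the form $\tfrac{n}{3}+\Theta(\log n)$ for the added edges. That is asymptotically \emph{stronger} than $3n/8$, but it exceeds $\tfrac{3n}{8}+\tfrac{3}{4}$ for all $n$ in a wide middle range (roughly $10\le n\le$ several hundred), which is far more than ``a handful of base cases'' to check by hand. So as written the argument neither closes inductively nor covers all $n$. (A minor further slip: in the left-filled convention it is $T_L$ that inherits the partial last level when that level is less than half full, and $T_R$ that is perfect; exactly one side is perfect, but which one depends on $n$.)

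The paper avoids recursion entirely and uses a direct degree count, which is where the constant $11/8$ actually comes from. Start from the trivial bound $|E|\le \tfrac{3n}{2}$ (every vertex has degree at most $3$) and subtract one half of the number of vertices that finish the algorithm with degree $2$. Group the $\lceil n/2\rceil$ leaves four at a time (last level from the left, penultimate level from the right) into subtrees of height two; there are at least $\tfrac{n-6}{8}$ such subtrees, and a local case analysis of how the traversal passes through a height-two subtree shows that at least two of its four leaves never receive an added edge. Hence at least $\tfrac{n}{4}-\tfrac{3}{2}$ vertices keep degree $2$ and $|E|\le \tfrac{3n}{2}-\left(\tfrac{n}{8}-\tfrac{3}{4}\right)=\tfrac{11n}{8}+\tfrac{3}{4}$, uniformly in $n$ with no induction and no case split on residues. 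If you want to salvage your divide-and-conquer route, you would need either to charge the per-level jump edges against the savings inside the perfect subtrees (amortizing the $O(\log n)$ overhead), or to accept the weaker coefficient your recursion actually supports; as it stands, the key inequality you need is exactly the one that fails.
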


\begin{proof}
The number of leaves in a corresponding binary tree with $n$
vertices is $\lceil \frac {n}{2} \rceil$. Some leaves are on the
last level, and some are on the level before last. Consider the
consecutive leaves in the last level starting from the left and
group them into groups of size four. This way we get subtrees of
height two and there can be at most three unpaired leaves. We will
do the same thing for the leaves in the previous level, but starting
from the right. \rz

Therefore, the number of subtrees of height two is less than or
equal to $\frac{n}{8}$ and greater than $\frac{n - 2 \cdot 3}{8}$.
According to the two cases in Figure \ref{tree4}, there can be two
or three leaves in each of these subtrees that will get and keep
degree two until the algorithm finishes. Thus, the number of edges
we do not have to add equals half the number of vertices of degree
two, and it is between $\frac {n}{8} - \frac{3}{4}$ and
$\frac{3n}{16}$. The total number of edges in the constructed graph
after running the algorithm will be between $\lfloor \frac{21 n}{16}
\rfloor$ and $\lfloor \frac{11 n}{8} + \frac{3}{4} \rfloor$.
\end{proof}

\vspace{-3mm}

\begin{figure}[h]
  \center
  \includegraphics [width = 8cm]{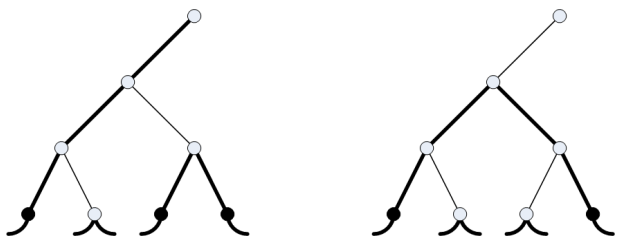}
  \caption { \textit{Two cases for subtrees of height two} }
  \label{tree4}
\end{figure}

\begin{remark}
After the hamiltonian cycle is constructed, we may insert additional
edges into the graph to make it cubic, provided $n$ is even.
\end{remark}

For the case $\Delta > 3$, one can construct a hamiltonian graph
with at most
$$
\left( 2 - \frac{1}{\Delta - 1} - \frac{(\Delta - 2)^2}{(\Delta -
1)^3} \right) n + \frac{\Delta - 3}{2} + 2 (\Delta - 2)
$$
edges. The same approach may be applied to complete binary subtrees
of greater heights to obtain a slightly finer bound for the total
number of edges.

\section{Concluding remarks}

The proposed algorithm may be modified to construct a planar
hamiltonian graph.

\begin{te}
    By appropriately choosing descendant and unvisited
    leaves in the algorithm, one can assure that the constructed graph is planar.
\end{te}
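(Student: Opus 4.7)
The plan is to prescribe the two free choices in Algorithm~1 so that all added edges fit inside the outer face of the natural planar drawing of the $\Delta$-ary tree. I would draw the tree with the root on top and the children of each internal node placed left-to-right beneath it, so that the leaves occupy the bottom row. Inside the algorithm I would always descend to the \emph{leftmost} unmarked child, and whenever a jump edge is forced I would insert it to the \emph{leftmost} unmarked leaf.

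The technical core of the proof is the claim that, with these rules, the leaves are marked in strict left-to-right order. I would prove this by induction on the execution. While the algorithm is descending it follows a leftmost root-to-leaf path, so the first leaf visited is the leftmost overall. Once the algorithm has arrived at a leaf $v$, either the parent is still unmarked, in which case it climbs until it meets an already marked ancestor and then the leftmost-unmarked-child rule pushes it down into the very next subtree to the right, whose leftmost leaf is the next one visited; or the parent is already marked, in which case the leftmost-unmarked-leaf rule directly selects the leaf immediately to the right of $v$ (by induction all leaves to the left of $v$ are already marked). Granting this claim, every jump edge connects two leaves that are neighbours on the bottom row and can therefore be drawn as a small arc in the outer face just below the leaves; distinct such arcs are pairwise disjoint. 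The last visited vertex is then the rightmost leaf, so the closing edge back to the root can be routed along the right-hand side of the outer face, and the resulting embedding is planar.

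The main obstacle is the near-complete case, in which the tree contains the single vertex $s$ of degree strictly between $2$ and $\Delta$. A direct run of the algorithm may terminate at $s$, and the closing edge would then duplicate the existing tree edge $(s,\mathrm{parent}(s))$. I would handle this precisely as in the proof of Theorem~\ref{th-2.1}, by prepending the tree path from the root down to $s$ as the first few steps of the traversal; this prefix adds no new edges. To keep the embedding consistent with the left-to-right leaf ordering used above, I would reorder the children along this prefix path --- equivalently, flip the relevant subtrees in the drawing --- so that $s$ becomes the leftmost unvisited descendant at each step. After this preprocessing the analysis of the previous paragraph applies verbatim to the remainder of the run, and the constructed supergraph is planar.
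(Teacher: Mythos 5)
Your argument hinges on the claim that, with the ``leftmost unmarked child'' and ``leftmost unmarked leaf'' rules, the leaves are marked in strict left-to-right order, so that every added edge joins two consecutive leaves of the bottom row. That claim is false, because the algorithm's \emph{first} priority is to climb to an unmarked parent, and this climbing can carry the traversal past unvisited leaves. Concretely, take the complete binary tree on $15$ vertices with root $1$, internal vertices $2,\dots,7$ and leaves $8,\dots,15$ in left-to-right order. Your rules give the walk $1,2,4,8,9,10,5,11,12,6,3,7,14,13,15$: after the jump to leaf $12$ the algorithm climbs $12\to 6\to 3$ (both parents are unmarked), and at $3$ the leftmost unmarked \emph{child} is $7$, so it descends to leaf $14$ while leaf $13$ is still unvisited; the leaf order is $8,9,10,11,12,14,13,15$. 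On the $31$-vertex complete binary tree the same phenomenon produces the added edges $(28,25)$ and $(27,29)$ (labelling level by level, so the bottom row is $16,\dots,31$), whose endpoints interleave as $25<27<28<29$ in the left-to-right leaf order; these two arcs cross in the drawing you describe, so the ``small disjoint arcs below the leaves'' picture breaks down and your proof does not go through. (Your treatment of the exceptional vertex $s$ and of the closing edge to the root is fine in spirit, but it is moot given this gap.)

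The paper avoids this trap by not asking for a global left-to-right leaf order. Instead it argues by induction on subtrees: the walk first runs from the root down to the rightmost leaf, which strands a family of pairwise disjoint subtrees hanging off that path; each stranded subtree is then traversed by a sub-path that enters at its leftmost leaf and exits at a nearby leaf, and because the subtrees are vertex-disjoint their added edges can be drawn independently without interference, with one final non-crossing edge closing the cycle. If you want to salvage your approach, you would have to either change the descent rule so that the walk re-enters the \emph{leftmost} stranded subtree rather than the leftmost unmarked child of the highest reached ancestor (which is essentially the paper's inductive organization), or prove a weaker invariant --- e.g.\ that the set of added chords is non-crossing with respect to the cyclic order of the tree's single face --- which the $31$-vertex example shows does not hold for the rules as you stated them.
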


\begin{proof}
    In order to prove the theorem, we will construct a hamiltonian path that
    starts at the leftmost leaf and ends in the nearest leaf (neighboring
    leaf or leaf that is at distance three from it). For small values
    of $n$, this can be easily verified. In the general case
    we first go upwards to the root and then to the rightmost leaf. Now,
    the $\Delta$-ary tree is partitioned into smaller trees, which will be
    traversed by induction from left to right. These binary trees
    do not have vertices in common, and we can independently add
    necessary edges which do not intersect the existing edges.
    We reduce our problem to the previous case by going to the leftmost
    leaf. Now we have disjoint trees and we traverse them starting
    from the leaves. Finally, we add the last edge without intersection problems
    (as in Figure~\ref{example}).
\end{proof}


For the case $\Delta = 3$, in our implementation we always choose a
leaf that is farthest from the current leaf. This heuristic is done
by the breadth first search. We use only three arrays of length $n$,
so memory requirements are linear in $n$. Time complexity is $O
(n^2)$, because the number of edges is $m \leqslant \frac{3n}{2} = O
(n)$. The diameters of the examples of cubic graphs constructed by
this algorithm are shown in Figure \ref{graphics}, where the
$x$-axis carries $n/2$ and $n$ is the number of nodes. It is obvious
from this figure that the constant~$2$ from our bound is not the
best possible.
\newpage

\begin{figure}[h]
  \center
  \includegraphics [width = 10.5cm]{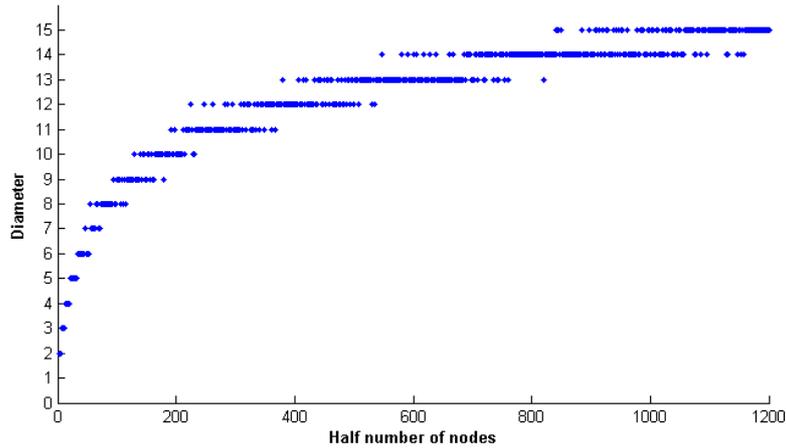}
  \caption { \textit{The size of diameter for $n = 4$ to $n = 2400$ for $\Delta = 3$} }
  \label{graphics}
\end{figure}

Instead of choosing the leaves at random, we can do it more
sophisticatedly and further reduce the diameter of the graph. One
possible way is to add a matching which connects only the leaves on
the last level in the root's left and right subtree. This way we can
still apply the algorithm, but if we have to choose a random leaf to
continue---we first check whether the paired leaf in the matching is
marked. This way we decrease the diameter by a constant, which is at
least one. We leave for future study to see whether this approach
may be used to construct $1$-edge hamiltonian graphs. \vspace{0.2cm}

{\bf Acknowledgement: } The authors are grateful to the reviewers
for their valuable comments and suggestions.

\end{document}